\newcommand {\al}   {\alpha}       
\newcommand {\dl}   {\delta}       
\newcommand {\ve}   {\varepsilon}
         \newcommand {\om}  {\omega}
\newcommand {\pl}   {\partial}     
\renewcommand {\sin}{{\sf\,sin\,}}       \renewcommand {\cos}{{\sf\,cos\,}}
         \newcommand {\ctg}{{\sf\,ctg\,}}
       \renewcommand {\lim}{{\sf\,lim\,}}
     \newcommand   {\diag}{{\sf\,diag\,}}
   \newcommand {\MB}  {{\mathbb B}}
\newcommand {\MM}  {{\mathbb M}}   
\newcommand {\MO}  {{\mathbb O}}   
   \newcommand {\MR}  {{\mathbb R}}
\newcommand {\MS}  {{\mathbb S}}   
\newcommand {\MU}  {{\mathbb U}}
\newcommand {\Go}  {\mathfrak{o}}   
\newcommand {\Gs}  {\mathfrak{s}}
\newcommand {\one}  {1\!\!1}
\newtheorem{theorem}{Theorem}[section]
\theoremstyle{definition}
\begin{document}
\title     {Gauge parameterization of the $n$-field}
\author    {M. O. Katanaev
            \thanks{E-mail: katanaev@mi.ras.ru}\\ \\
            \sl Steklov mathematical institute,\\
            \sl 119991, Moscow, ul.~Gubkina, 8.\\
            \sl N.~I.~Lobachevsky Institute of Mathematics and Mechanics,\\
            \sl Kazan Federal University,\\
            \sl ul.~Ktremlevskaya 35, Kazan, 420008 Russia}
\maketitle
\begin{abstract}
We propose gauge parameterization of the three-dimensional $n$-field using
orthogonal $\MS\MO(3)$-matrix which, in turn, is defined by a field
taking values in the Lie algebra $\Gs\Go(3)$ (rotational-angle field). The
rotational-angle field has an additional degree of freedom, which corresponds to
the gauge degree of freedom of rotations around the $n$-field. As a result, we
obtain a gauge model with local $\MS\MO(2)\simeq\MU(1)$ symmetry that does not
contain a $\MU(1)$ gauge field.
\end{abstract}
%******************************************************************************
\section{Introduction}
%*******************************************************************************
Gauge models are an essential part of modern mathematical physics. The gauge
invariance of Yang--Mills models is achieved by introducing gauge fields which
are components of local connection form for the corresponding principal fiber
bundle (see., e.g., \cite{SlaFad88}). It is these models that are usually called
gauge models. In the present paper, the gauge model is understood in a wider
sense: it is any field model that is invariant under some local transformation
group whose parameters can depend sufficiently smooth on a space-time point. In
this sense, general relativity is also a gauge model, because the
Hilbert--Einstein action is invariant with respect to general coordinate
transformations parameterized by four arbitrary functions. In addition, the
action depends only on the metric or vierbein, which are not gauge fields in the
strict sense.

Thus, the models invariant under local transformations do not always contain
gauge fields. In the present paper, we construct a new class of models with
local $\MU(1)\simeq\MS\MO(2)$ invariance that does not include a gauge
$\MU(1)$-field. This model arose in the geometric theory of defects [2--6].
\nocite{KatVol92,KatVol99,Katana05,Katana13B,Katana17C}
Namely, some continuous medium possesses a spin structure in addition to elastic
properties. For instance, the ferromagnetic properties of media are described by
the distribution of magnetic moments. In the continuum approximation, such a
medium is considered as a three-dimensional manifold $\MM\approx\MR^3$ with
given unit vector field $n(x):~\MM\to\MS^2$ that describes the spin distribution
in the medium. If the unit vector field is sufficiently smooth, then we say that
the spin structure has no defects and write down some Lagrangian for $n$-field.
However, in nature, the spin structure often contains defects, which are called
disclinations. These are any discontinuities and other singularities of the
$n$-field, whose supports can be located at points, on lines, or on surfaces.
If there are few disclinations, then we can pose a problem for the $n$-field
outside defects with appropriate boundary conditions at the discontinuities of
the $n$-field. This approach is applicable to a small number of separate
disclinations. However, if there are many disclinations (which is the most
common case for real media), the boundary conditions become so complicated that
one cannot hope to solve the corresponding boundary value problems. In the
limiting case of continuous distribution of disclinations, the $n$-field has
discontinuities at every point, which means that it does not exist at all.
Therefore, the $n$-field is not suitable for describing media with
disclinations, and we need a new formalism.

In order to describe single disclinations as well as their continuous
distribution, the geometric theory of defects was proposed [2--6]. In this
approach, the $n$-field is substituted by a new variable, an
$\MS\MO(3)$-connection, which is nonsingular for continuous distribution of
disclinations. For single disclinations it may have singularities at points,
on lines, or on surfaces. The new variable is introduced as follows. We fix some
direction in space and parameterize the $n$-field by an orthogonal matrix.
In turn, the rotation matrix is parameterized by an element of the Lie algebra
$\Gs\Go(3)$; i.e.\ we have a rotational-angle field
$\boldsymbol{\om}(x):~\MM\to\Gs\Go(3)$. If there are no disclinations, then the
rotational-angle field $\boldsymbol{\om}(x)$ is a smooth function and the
partial derivatives $\pl_\mu\boldsymbol{\om}$ exist. In the presence of
disclinations, the partial derivatives may not exist, and we introduce a new
variable $\pl_\mu\boldsymbol{\om}\mapsto\om_\mu{}$, which is a 1-form with
values in the Lie algebra $\Gs\Go(3)$ and which is identified with the
components of a local $\MS\MO(3)$-connection form. In this case, disclinations
exist if and only if the curvature tensor for the $\MS\MO(3)$-connection
is nonzero. On simply connected domains with zero curvature tensor, the
$\MS\MO(3)$-connection is a pure gauge and one can construct the the
rotational-angle field $\boldsymbol{\om}$ and the $n$-field. In the other
cases, the rotational-angle and $n$-field do not exist, as it should be, for
example, for a continuous distribution of disclinations.

The change of variables $n(x)\mapsto\boldsymbol{\om}(x)$ is a necessary
attribute of the geometric theory of defects and thus needs to be carefully
analysed. The problem is that this change of variables is not one-to-one:
the $n$-field has two degrees of freedom because of the condition $n^2=1$, and
the rotational-angle field $\boldsymbol{\om}$ has three degrees of freedom. The
additional degree of freedom corresponds to $\MS\MO(2)$-rotations around the
$n$-field and is a gauge one. This question is the subject of the present paper.
%******************************************************************************
\section{Angle parameterization of the $n$-field}
%*******************************************************************************
In the geometric theory of defects, a unit vector field $n(x):~\MR^3\to\MS^2$,
which describes, for example, the distribution of magnetic moments in
ferromagnets, is parameterized by the rotational-angle field. To this end, we
fix some direction in Euclidean space $\MR^3$ by choosing a unit vector $n_0$.
Then the unit vector field is uniquely represented by the orthogonal matrix:
\begin{equation}                                                  \label{enpara}
  n^i(x):=n^j_0 S_j{}^i\big(\boldsymbol{\om}(x)\big),\qquad S_j{}^i\in\MO(3).
\end{equation}
In turn, the matrix is uniquely parameterized by an element
$\boldsymbol{\om}(x)=\big(\om^i(x)\big)$ (rotation-angle vector field) of the
Lie algebra $\Gs\Go(3)$. The rotational-angle $\boldsymbol{\om}$ parameterizes
the proper rotation subgroup $\MS\MO(3)\subset\MO(3)$ as follows. The direction
of the vector $\boldsymbol{\om}$ coincides with the rotational axis, and its
length is equal to the rotational angle. For definiteness, we assume that the
rotation angle varies in the range $|\boldsymbol{\om}|\le\pi$. Then the end of
$\boldsymbol{\om}$ runs over all points of the closed ball
$\bar\MB^3_\pi(0)\hookrightarrow\MR^3$ of radius $\pi$ centered at the origin.
In addition, the diametrically opposite points of the bounding sphere
$\MS^2_\pi(0)=\pl\bar\MB^3_\pi(0)$ must be identified, since they correspond to
the same rotation.

The change of variables $n(x)\mapsto\boldsymbol{\om}(x)$ is not a
parameterization in the strict sense of the word. Each value of the
rotation-angle field uniquely defines the $n$-field by formula (\ref{enpara}),
but the converse statement is not true for two reasons. First, the $n$-field
does not define the orthogonal matrix $S$ uniquely, because equality
(\ref{enpara}) does not change if it is multiplied (at every point $x$) by an
arbitrary orthogonal matrix corresponding to rotations around the vector $n(x)$
itself. Second, infinitely many elements of the Lie algebra $\Gs\Go(3)$ are
mapped to the same element of $\MO(3)$. It is the ambiguity of the ``map''
$n(x)\mapsto\boldsymbol{\om}(x)$ that we study in the present section.

The full rotational group consists topologically of two connected components:
$\MO(3)=\MS_+\cup\MS_-$, where $\MS_+$ and $\MS_-$ are the sets of orthogonal
matrices with positive and negative determinants, respectively. The component
$\MS_+$ is the Lie subgroup of special orthogonal matrices
$\MS_+\approx\MS\MO(3)\subset\MO(3)$ (the connected component of unity). The
component $\MS_-$ is a coset of element: $\MS_-=\MS_+g$, where $g$ is any
element in $\MS_-$, for example, $\MS_-=\MS_+(-\one)$, $-\one\in\MS_-$ being
the diagonal $3\times3$ matrix $\diag(-1,-1,-1)$. The Lie algebra $\Gs\Go(3)$
itself is a three-dimensional vector space $\Gs\Go(3)\approx\MR^3$. The
exponential map $\Gs\Go(3)\to\MO(3)$ is surjective because the rotation group is
compact. At the same time, the map $\Gs\Go(3)\to\MO(3)$ is not one-to-one
because infinitely many elements of the algebra are mapped to the same element
of the group.

An explicit parameterization of an orthogonal matrix from the $\MS\MO(3)$
component by the rotation-angle field is
\begin{equation}                                                  \label{elsogt}
  S_{i}{}^j=\dl_i^j\cos\om+\frac{(\om\ve)_i{}^j}\om\sin\om
  +\frac{\om_i\om^j}{\om^2}(1-\cos\om)\qquad\in\MS\MO(3),
\end{equation}
where $\om:=|\boldsymbol{\om}|:=\sqrt{\om^i\om_i}$ is the length of the vector
$\boldsymbol{\om}$. Here we use the notation
\begin{equation}                                                  \label{epeans}
  (\om\ve)_i{}^j:=\om^k\ve_{ki}{}^j\qquad\in\Gs\Go(3),
\end{equation}
where $\ve_{ijk}$ is the totally antisymmetric third-rank tensor, $\ve_{123}=1$.

It is easy to verify that there is only one equivalence relation in the Lie
algebra,
\begin{equation*}
  \boldsymbol{\om}\sim\boldsymbol{\om}+2\pi\frac{\boldsymbol{\om}}\om,
\end{equation*}
such that equivalent elements of the Lie algebra are mapped to the same element
of the rotation group $\MS\MO(3)$.

We have found it more convenient to use another parameterization of the elements
of the Lie algebra: $\lbrace\om^i\rbrace\mapsto\lbrace k^i,\om\rbrace$, where
$k=(k^i:=\om^i/\om)$ is the unit vector along the rotational axis, $k^2=1$,
and $\om\in[-\pi,\pi]$ is the rotation angle. The orthogonal matrix
(\ref{elsogt}) in the new variables is
\begin{equation}                                                  \label{elsogp}
  S_{i}{}^j=\dl_i^j\cos\om+k^k\ve_{ki}{}^j\sin\om
  +k_ik^j(1-\cos\om)\qquad\in\MS\MO(3).
\end{equation}
The inverse matrix is obtained by the substitution $k^i\mapsto-k^i$:
\begin{equation}                                                  \label{elsoop}
  S^{-1}_{~\ ~i}{}^j=\dl_i^j\cos\om-k^k\ve_{ki}{}^j\sin\om
  +k_ik^j(1-\cos\om)\qquad\in\MS\MO(3).
\end{equation}

The following equalities are easy to check:
\begin{equation}                                                  \label{ubbcgk}
  n^i=n_0^jS_j{}^i(\om,k)=n_0^jS_j{}^k(\om,k)S_k{}^i(\psi,n),
\end{equation}
where the rotation axis in the last matrix coincides with the vector $n$ and
the angle $\psi$ is arbitrary and may depend on $x$ in a sufficiently smooth
way. The arbitrariness in the choice of $\psi(x)$ corresponds to gauge
transformations.

Indeed, each rotational matrix uniquely defines the vector $n$, but the inverse
statement is not true: vector $n$ does not define a unique $S$. This can be seen
even by counting the number of independent variables: the vector field $n$ has
two independent components due to the condition $n^2=1$, while the
rotational-angle field $\boldsymbol{\om}$ has three independent components. We
will see in what follows that the additional degree of freedom is a gauge one
and can be eliminated by a gauge transformation.

The following statement is the main result of the paper.
\begin{theorem}
Let $n_0$ be a fixed unit vector and $\big(k(x),\om(x)\big)$ and
$\big(k'(x),\om'(x)\big)$ be two sets of smooth fields related by the gauge
transformation
\begin{align}                                                     \label{ujehtr}
  \sin\om'=&\frac{2\sin\dfrac{\raise-.5ex\hbox{$\om$}}2\sin\upsilon
  \big(\cos\dfrac{\raise-.5ex\hbox{$\om$}}2\sin\upsilon\cos\al
  -\cos\upsilon\sin\al\big)}{1-\big(\cos\upsilon\cos\al
  +\cos\dfrac{\raise-.5ex\hbox{$\om$}}2\sin\upsilon\sin\al\big)^2},
\\ \intertext{or}                                                \label{unvhtr}
  \cos\om'=&\frac{1-2\sin^{\!2}\dfrac{\raise-.5ex\hbox{$\om$}}2\sin^{\!2}
  \upsilon-\big(\cos\upsilon\cos\al
  +\cos\dfrac{\raise-.5ex\hbox{$\om$}}2\sin\upsilon\sin\al\big)^2}
  {1-\big(\cos\upsilon\cos\al
  +\cos\dfrac{\raise-.5ex\hbox{$\om$}}2\sin\upsilon\sin\al\big)^2},
\\ \intertext{and}                                                \label{enbdgt}
  k^{\prime i}=&k^i\cos\al+\left(-k^i\cos\dfrac{\raise-.5ex\hbox{$\om$}}2
  \cos\upsilon+n_0^i\cos\dfrac{\raise-.5ex\hbox{$\om$}}2
  +n_0^jk^k\ve_{kj}{}^i\sin\dfrac{\raise-.5ex\hbox{$\om$}}2\right)
  \frac{\sin\al}{\sin\upsilon},
\end{align}
where the angle $\upsilon$ is defined by the equality
\begin{equation}                                                  \label{edhfpo}
  \cos\upsilon:=(n_0,k)
\end{equation}
and $\al(x)\in\MR$ is an arbitrary smooth transformation parameter. Then
formulas (\ref{enpara}) and (\ref{elsogp}) define the same field $n(x)$. Any two
sets of fields $\big(k(x),\om(x)\big)$ and $\big(k'(x),\om'(x)\big)$ that define
the same field $n(x)$ are related by transformation
(\ref{ujehtr})--(\ref{enbdgt}) for some parameter $\al(x)$.
\end{theorem}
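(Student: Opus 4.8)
The plan is to reduce the statement to a composition law for two rotations and then read off the axis and angle of the product. Writing the parameterization (\ref{enpara}) as $n=n_0S$ with $S=S(\om,k)\in\MS\MO(3)$, two matrices $S$ and $S'$ produce the same $n$ exactly when $n_0S=n_0S'$, i.e.\ exactly when $S^{-1}S'$ fixes $n=n_0S$ (indeed $n\,(S^{-1}S')=n_0SS^{-1}S'=n_0S'=n$). The stabiliser of $n$ in $\MS\MO(3)$ is the one-parameter subgroup of rotations about the axis $n$, so $S'=S\,S(\psi,n)$ for a single angle $\psi$; this is precisely the freedom already displayed in (\ref{ubbcgk}). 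Hence the whole theorem amounts to computing the axis $k'$ and angle $\om'$ of the product rotation
\[
  S(\om',k')=S(\om,k)\,S(\psi,n),
\]
and showing that the outcome is (\ref{ujehtr})--(\ref{enbdgt}) after the gauge angle $\psi$ is relabelled by $\al$.

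Next I would evaluate the product via the half-angle (unit-quaternion) multiplication law, which for this product gives
\[
  \cos\tfrac{\om'}2=\cos\tfrac\om2\cos\tfrac\psi2-\sin\tfrac\om2\sin\tfrac\psi2\,(k,n),
\]
\[
  \sin\tfrac{\om'}2\,k'=\sin\tfrac\om2\cos\tfrac\psi2\,k+\cos\tfrac\om2\sin\tfrac\psi2\,n+\sin\tfrac\om2\sin\tfrac\psi2\,(n\times k).
\]
Here $n$ is itself the Rodrigues rotation of $n_0$ about $k$,
\[
  n=\cos\om\,n_0+\sin\om\,(k\times n_0)+(1-\cos\om)(n_0,k)\,k,
\]
from which $(k,n)=(n_0,k)=\cos\upsilon$ follows at once by (\ref{edhfpo}). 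Substituting this $n$ (and $n\times k$, using $k\times(k\times n_0)=\cos\upsilon\,k-n_0$, with the cross product fixed by the convention (\ref{epeans})) and collecting the coefficients of the three vectors $k$, $n_0$, $k\times n_0$, the angle-addition identities $\cos\tfrac\om2\cos\om+\sin\tfrac\om2\sin\om=\cos\tfrac\om2$ and $\cos\tfrac\om2\sin\om-\sin\tfrac\om2\cos\om=\sin\tfrac\om2$ collapse the result into a clean form. The natural parameter enters through the axis: $k'$ turns out to equal $k\cos\al+m\sin\al$ with $m$ a fixed unit vector orthogonal to $k$, so defining $\al$ by $(k,k')=\cos\al$ (with sign fixed by the orientation of $m$) makes the coefficients of $k$, $n_0$ and $k\times n_0$ in $k'$ manifestly those of (\ref{enbdgt}).

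The final step is the elimination of $\psi$. The coefficient comparison yields $\sin\tfrac{\om'}2\sin\al=\sin\upsilon\sin\tfrac\psi2$ and $\sin\tfrac{\om'}2\cos\al=\sin\tfrac\om2\cos\tfrac\psi2+\cos\tfrac\om2\cos\upsilon\sin\tfrac\psi2$, together with the scalar equation for $\cos\tfrac{\om'}2$ above; these three are mutually consistent, since the sum of their squares is $1$. Solving the first two for $\tan\tfrac\psi2$ in terms of $\al$ and substituting back produces the compact intermediate relations $\sin\tfrac{\om'}2\sin\upsilon'=\sin\tfrac\om2\sin\upsilon$ and $\cos\tfrac{\om'}2\sin\upsilon'=\cos\tfrac\om2\sin\upsilon\cos\al-\cos\upsilon\sin\al$, where $\cos\upsilon':=(n_0,k')=\cos\upsilon\cos\al+\cos\tfrac\om2\sin\upsilon\sin\al$ is exactly the quantity appearing in the denominators of (\ref{ujehtr})--(\ref{unvhtr}); thus $\sin^2\upsilon'=1-(n_0,k')^2$ is the common denominator. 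The double-angle formulas $\sin\om'=2\sin\tfrac{\om'}2\cos\tfrac{\om'}2$ and $\cos\om'=1-2\sin^2\tfrac{\om'}2$ then reproduce (\ref{ujehtr}) and (\ref{unvhtr}) verbatim. Both directions of the theorem follow at once: by construction each $\psi$ gives the $(\om',k')$ of $S\,S(\psi,n)$ together with a value of $\al$, while the stabiliser argument of the first paragraph shows conversely that every $(\om',k')$ defining the same $n$ is of this form, so the family (\ref{ujehtr})--(\ref{enbdgt}) parameterised by $\al$ is exactly the set of all such $(\om',k')$.

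I expect the main obstacle to be the trigonometric bookkeeping of the elimination step: verifying that the stated mixture of half-angles $\om/2$ and full angles $\upsilon,\al$ assembles correctly, and that the denominator is precisely $\sin^2\upsilon'$. A secondary issue is the degenerate locus $\sin\upsilon=0$ (that is $k\parallel n_0$, where $n=\pm n_0$ and the factor $\sin\al/\sin\upsilon$ in (\ref{enbdgt}) is indeterminate), which must be handled by continuity since there the axis $k$ is itself undetermined.
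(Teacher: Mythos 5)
Your plan is correct, and it reaches the theorem by a genuinely different route from the paper. The paper's proof is geometric: it introduces the minimal-angle representative $(\om_0,k_0)$ whose axis $k_0$ is perpendicular to the plane of $n_0$ and $n$, shows via the triangle construction of Fig.~\ref{fangles} that every admissible axis has the form $k=k_0\cos\phi+m\sin\phi$ with $m$ the unit vector along $n_0+n$ (formulas (\ref{unvbyt})--(\ref{unncbf})), inverts this map to express $(\om_0,k_0,\phi)$ through $(\om,k)$ (formulas (\ref{uvvxfs})--(\ref{uznxbt})), and obtains (\ref{ujehtr})--(\ref{enbdgt}) by shifting $\phi\mapsto\phi+\al$. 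You instead argue algebraically: the gauge orbit is the coset $S(\om,k)\cdot\mathrm{Stab}(n)$ with $\mathrm{Stab}(n)\simeq\MS\MO(2)$ realized by $S(\psi,n)$ as in (\ref{ubbcgk}), the product is evaluated by the half-angle (quaternion) composition law, and the stabiliser angle $\psi$ is then traded for $\al$ defined by $(k,k')=\cos\al$. The two notions of the parameter agree, since in the paper's picture both $k$ and $k'$ lie in the fixed plane spanned by $k_0$ and $m$, so the angle between them is exactly $\phi'-\phi=\al$; and your intermediate identities $\sin\frac{\om'}2\sin\upsilon'=\sin\frac{\om}2\sin\upsilon$ and $\cos\frac{\om'}2\sin\upsilon'=\cos\frac{\om}2\sin\upsilon\cos\al-\cos\upsilon\sin\al$, with $\cos\upsilon'=(n_0,k')$ equal to the expression squared in the denominators, do check out and yield (\ref{ujehtr})--(\ref{unvhtr}) by the double-angle formulas. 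What each approach buys: yours makes the completeness half of the statement transparent (it is exactly the fact that the stabiliser of a unit vector in $\MS\MO(3)$ consists of rotations about it, up to the standard axis--angle ambiguity $(\om,k)\sim(-\om,-k)$, which your family covers at $\al+\pi$) and identifies the denominator structurally as $\sin^{2}\upsilon'$, which the paper does not make explicit; the paper's route has the advantage of producing the gauge-fixed representative $(\om_0,k_0)$ that is reused in Section~3. Both arguments treat the degenerate locus $\sin\upsilon=0$ (i.e.\ $n=\pm n_0$) only implicitly, as you note.
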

\begin{proof}
To prove the theorem, we need a rather cumbersome but elementary construction,
which is illustrated in Fig.~\ref{fangles}. Assume that the rotation takes the
vector $n_0$ to a vector $n\ne n_0$. This rotation does not define the
rotational matrix uniquely, because after the rotation the vector $n$ can be
additionally multiplied by a rotational matrix whose rotation axis $k$ coincides
with $n$ (see (\ref{ubbcgk})). This can be done independently at every point
$x\in\MM$, which corresponds to the gauge $\MU(1)$ freedom
$\psi(x)\mapsto\psi(x)+\al(x)$, where $\al(x)$ is the transformation parameter.
\begin{figure}[hbt]%------------------------------------------------------------
\hfill\includegraphics[width=.6\textwidth]{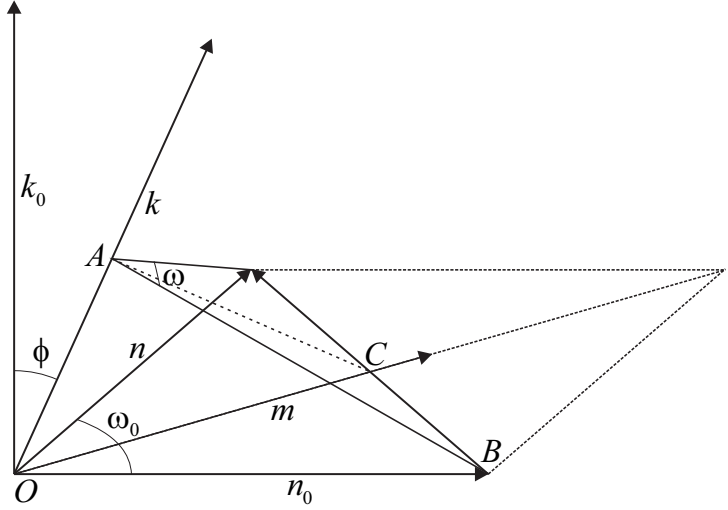}
\hfill {}
\centering\caption{Angle parameterization of rotations.}
\label{fangles}
\end{figure}%-------------------------------------------------------------------

Let us perform calculations. The rotation angle $\om_0$ is minimal if and
only if the rotational axis $k_0$ is perpendicular to the plane passing through
the vectors $n_0$ and $n$. In this case, the unit vector along the rotational
axis is given by the vector product:
\begin{equation}                                                  \label{uvxbcr}
  k_0^i:=\frac{\ve^{ijk}n_{0j}n_k}{\sin\om_0}
\end{equation}
The corresponding rotation angle is defined by the equality
\begin{equation}                                                  \label{ukkdyv}
  \cos\om_0:=(n_0,n):=n_0^in^j\dl_{ij},
\end{equation}
where the parentheses denote the ordinary scalar product in $\MR^3$.

Vector $n$ can be obtained from $n_0$ if and only if the rotation is around an
axis $k$ lying in the plane passing through the vectors $k_0$ and $n_0+n$.
Let $m$ be the unit vector along the sum $n_0+n$. Then its components are
\begin{equation}                                                  \label{uvxfsw}
  m^i:=\frac{n_0^i+n^i}{\sqrt{2\big(1+(n_0,n)\big)}}
  =\frac{n_0^i+n^i}{2\cos\frac\om2}.
\end{equation}
Any unit vector $k$ in the plane $k_0$, $m$ has the form
\begin{equation}                                                  \label{uxbstr}
  k^i=k_0^i\cos\phi+m^i\sin\phi,\qquad\phi\in(-\pi,\pi),
\end{equation}
for some angle $\phi$ in the plane $k_0$, $m$.

Assume that the vector $n_0$ is fixed and we are given values of the variables
$\om_0$, $k_0$ and $\phi$ \big(three independent variables due to the conditions
$k_0^2=1$ and $(n_0,k_0)=0$\big). Then we have to find $\om$ and $k$ to define
the rotation matrix $S_i{}^j(\om,k)$. The vector $k$ is given by (\ref{uxbstr})
with
\begin{equation}                                                  \label{ehsyub}
  n^i=n_0^jS_j{}^i(\om_0,k_0)=n_0^i\cos\om_0+n_0^jk_0^k\ve_{kj}{}^i\sin\om_0.
\end{equation}
To find the angle $\om$, we make the following construction. Consider the right
triangle $ABC$ lying in the plane perpendicular to the vector $k$. Let
$\upsilon$ be the angle between the vectors $n_0$ and $k$ (see (\ref{edhfpo})).
Then
\begin{equation*}
  AB=\sin\upsilon=\sqrt{1-(n_0,k)^2}=\sqrt{1-(n_0,m)^2\sin^{\!2}\phi},
\end{equation*}
where we used equality (\ref{uxbstr}). On the other hand, considering the right
triangle $OBC$, we see that
\begin{equation*}
  BC=\sin\frac{\om_0}2.
\end{equation*}
Consequently,
\begin{equation}                                                  \label{unvbyt}
  \sin\frac\om2=\frac{BC}{AB}
  =\frac{\sin(\om_0/2)}{\sqrt{1-\cos^{\!2}(\om_0/2)\sin^{\!}2\phi}},
\end{equation}
since $(n_0,m)=\cos(\om_0/2)$.

Straightforward calculations yield the formulas
\begin{equation}                                                  \label{ubbsfd}
  \sin\om=\frac{\sin\om_0\cos\phi}{1-\cos^{\!2}(\om_0/2)\sin^{\!2}\phi},\qquad
  \cos\om=\frac{\cos\om_0-\cos^{\!2}(\om_0/2)\sin^{\!2}\phi}
  {1-\cos^{\!2}(\om_0/2)\sin^{\!2}\phi}.
\end{equation}

In view of (\ref{ehsyub}), we have
\begin{equation}                                                  \label{unncbf}
  k^i=k_0^i\cos\phi+\left(n_0^i\cos\frac{\om_0}2
  +n_0^jk_0^k\ve_{kj}{}^i\sin\frac{\om_0}2\right)\sin\phi.
\end{equation}

Thus, formulas (\ref{ubbsfd}) and (\ref{unncbf}) express $\om$ and $k$ in terms
of $\om_0$, $k_0$, and $\phi$ for a fixed vector $n_0$. Moreover, the vector $n$
does not depend on $\phi$:
\begin{equation*}
  n^i=n_0^jS_j{}^i(\om_0,k_0)=n_0^jS_j{}^i(\om,k).
\end{equation*}

When we construct a model in the framework of the geometric theory of defects,
we regard the components of the field $\boldsymbol{\om}(x)$ (three variables)
or, equivalently, $\om(x)$ and $k(x)$ with the additional condition $k^2=1$ as
independent variables. Thus, the number of variables in $\MO(3)$ models
increases from two to three, because the $n$-field does not depend on the field
$\phi(x)$, which was introduced in (\ref{uxbstr}). This field is a gauge
parameter of the $\MU(1)$ transformation $(\om,k)\mapsto(\om',k')$, because
\begin{equation*}
  n^i(x)=n_0^jS_j{}^i(\om,k)=n_0^jS_j{}^i(\om',k'),
\end{equation*}
where the primed fields $\om'$, $k'$ are built for the field
$\phi'(x):=\phi(x)+\al(x)$ with the transformation parameter $\al$ for the same
$\om_0$ and $k_0$. To find an explicit form of the gauge transformations, which
is rather cumbersome, we consider the sequence
$(\om,k)\mapsto(\om_0,k_0,\phi)\mapsto(\om',k')$ of one-to-one transformations.
We find first the transformation $(\om,k)\mapsto(\om_0,k_0,\phi)$ for a given
$\phi$. The rotational matrix (\ref{elsogp}) immediately implies an expression
for the rotation angle $\om_0$:
\begin{equation}                                                  \label{uvvxfs}
  \cos\om_0=(n,n_0)=1-2\sin^{\!2}\dfrac{\raise-.5ex\hbox{$\om$}}2\sin^{\!2}
  \upsilon.
\end{equation}
Straightforward calculations yield an expression for the sine:
\begin{equation}                                                  \label{ubdytr}
  \sin\om_0=2\sin\dfrac{\raise-.5ex\hbox{$\om$}}2
  \sin\upsilon\sqrt{1-\sin^{\!2}\upsilon\sin^{\!2}
  \dfrac{\raise-.5ex\hbox{$\om$}}2}.
\end{equation}
In what follows, we need half-angle expression
\begin{equation}                                                  \label{unvhfy}
  \sin\dfrac{\raise-.5ex\hbox{$\om_0$}}2=\sin\dfrac{\raise-.5ex\hbox{$\om$}}2
  \sin\upsilon,\qquad  \cos\dfrac{\raise-.5ex\hbox{$\om_0$}}2
  =\sqrt{1-\sin^{\!2}\upsilon\sin^{\!2}\dfrac{\raise-.5ex\hbox{$\om$}}2}.
\end{equation}

To find $k_0$, we must compute $\phi$. Multiplying (\ref{uxbstr}) by $n_0$,
we get
\begin{equation}                                                  \label{uvbcfr}
\begin{split}
  \sin\phi=&\frac{\cos\upsilon}
  {\sqrt{1-\sin^{\!2}\upsilon\sin^{\!2}\frac\om2}},
\\
  \cos\phi=&\frac{\cos(\om/2)\sin\upsilon}
  {\sqrt{1-\sin^{\!2}\upsilon\sin^{\!2}\frac\om2}}.
\end{split}
\end{equation}
Now equality (\ref{uxbstr}) implies an expression for the components of $k_0$:
\begin{equation}                                                  \label{uznxbt}
  k_0^i=\frac{k^i\cos\frac\om2-\big(n_0^i\cos\frac\om2
  +n_0^jk^k\ve_{kj}{}^i\sin\frac\om2\big)
  \cos\upsilon} {\sin\upsilon\sqrt{1-\sin^{\!2}\upsilon\sin^{\!2}\frac\om2}}.
\end{equation}

To find an explicit expression for the gauge transformations with parameter
$\al(x)$, we have to substitute the obtained expressions
(\ref{uvvxfs})--(\ref{uznxbt}) into the formulas $\om'=\om'(\om_0,k_0,\phi')$
and $k'=k'(\om_0,k_0,\phi')$ and put $\phi':=\phi+\al$. Explicit formulas are
presented in the statement of the theorem.
\end{proof}

Thus, we have obtained explicit expressions for the gauge $\MU(1)$
transformations $(\om,k)\mapsto(\om',k^{\prime})$ with parameter $\al(x)$. To
check the expressions found, we may assume that the initial state coincides with
the state in which the rotation angle is minimal. Then it is easy to see that
under the substitution $(\om,k,\al)\mapsto(\om_0,k_0,\phi)$ (in this case
$\upsilon=\frac\pi2$) formulas (\ref{ujehtr})--(\ref{enbdgt}) transform into
(\ref{ubbsfd}) and (\ref{unncbf}).

For infinitesimal gauge transformations ($\al\ll1$), formulas
(\ref{ujehtr})--(\ref{enbdgt}) in the linear approximation in $\al$ are
simplified:
\begin{equation}                                                  \label{ubvnfy}
\begin{split}
  \om'=&\om+2\sin\dfrac{\raise-.5ex\hbox{$\om$}}2\cos\om\ctg\upsilon\,\al,
\\
  k^{\prime i}=&k^i+\left(-k^i\cos\dfrac{\raise-.5ex\hbox{$\om$}}2\cos\upsilon
  +n_0^i\cos\dfrac{\raise-.5ex\hbox{$\om$}}2
  +n_0^jk^k\ve_{kj}{}^i\sin\dfrac{\raise-.5ex\hbox{$\om$}}2\right)
  \frac\al{\sin\upsilon}.
\end{split}
\end{equation}

Note that the gauge $\MU(1)$ transformations in the case under consideration are
realized without introducing the gauge field. The $n$-field does not change
under these transformations. Therefore, after the substitution
$n\mapsto\boldsymbol{\om}$ according to (\ref{enpara}), any expression for the
Lagrangian for the $n$-field will be invariant under local transformations
(\ref{ujehtr})--(\ref{enbdgt}) with an arbitrary parameter $\al(x)$. This is not
a very unusual situation. Indeed, we are used to the fact that gauge invariance
arises after the introduction of gauge fields (components of a local connection
form) in the Yang--Mills theory. However, there exist other models with local
invariance. For example, general relativity is invariant under local
transformations (general coordinate transformations), with the metric being not
a gauge field.
%******************************************************************************
\section{Action for the Heisenberg ferromagnet in the geometric theory of
defects}
%*******************************************************************************
In the geometric theory of defects, the $n$-field is parameterized by the
rotation-angle field $\om(x)$ and the unit vector field $k(x)$, $k^2=1$, which
defines the axis of rotation. In addition, one of the three degrees of freedom
is gauge as shown in the previous section. To construct the action, we first
consider the simplest case when the rotation axis $k_0$ is perpendicular to the
vector $n_0$, which defines the orientation of the target space in space-time
(see\ Fig.~\ref{fangles}). In this case, the $n$-field is defined by formula
(\ref{ehsyub}), and the fields $\om_0$ and $k_0$ subject to two conditions
$k_0^2=1$ and $(k_0,n_0)=0$ are independent variables (the gauge freedom is
absent).

For definiteness, we choose the vector $n_0$ along the $z$ axis, i.e.\ set
$n_0=(0,0,1)$. Then the vector $k_0$ lies in the $x,y$ plane and can be
specified in spherical coordinates by one polar angle $\Psi(x)$:
\begin{equation*}
  k_0=(\cos\Psi,\sin\Psi,0).
\end{equation*}
It follows from (\ref{ehsyub}) that the components of the $n$-field are
\begin{equation}                                                  \label{ubvtgj}
\begin{split}
  n^1=&S_3{}^1=k_0^k\ve_{k3}{}^1=~~\sin\Psi\sin\om_0,
\\
  n^2=&S_3{}^2=k_0^k\ve_{k3}{}^2=-\cos\Psi\sin\om_0,
\\
  n^3=&S_3{}^3=\cos\om_0.
\end{split}
\end{equation}
In this case, angular parameterization of the $n$-field is equivalent to the
choice of spherical coordinates in the target space, which is given by the
simple identification $\om_0=\Theta$ and $\Psi=\Phi+\pi/2$. That is, the
Lagrangian of the $\MO(3)$ model is
\begin{equation}                                                  \label{uvcfzt}
  L=\frac12\big(\pl\om_0^2+\sin^{\!2}\om_0\pl\Psi^2).
\end{equation}

Now we consider a gauge model of ferromagnet in a general variables $\om$, $k$.
The form of the rotation matrices (\ref{elsogt}), (\ref{elsogp}) implies
that generally the $n$-field has components
\begin{equation}                                                  \label{uvxclj}
\begin{split}
  n^i(x)=&n_0^jS_j{}^i\big(\om(x),k(x)\big)
  =n_0^i\cos\om+n_0^jk^k\ve_{kj}{}^i\sin\om+k^i(n_0,k)(1-\cos\om),
\\
  n_i(x)=&S^{-1}_{~\ ~i}{}^j\big(\om(x),k(x)\big)n_{0j}
  =n_{0i}\cos\om-k^k\ve_{ki}{}^jn_{0j}\sin\om+k_i(n_0,k)(1-\cos\om).
\end{split}
\end{equation}

Simple straightforward calculations show that the Lagrangian of the Heisenberg
ferromagnet in the new variables has the form
\begin{align}                                                     \label{ubcgvt}
  &L=\frac12(\pl^\al n,\pl_\al n)=
\\                                                                     \nonumber
  &=\frac12\left[1-(n_0,k)^2\right](\pl\om)^2
  -2\left(n_{0i}\cos\frac\om2+n_0^jk^k\ve_{kji}\sin\frac\om2\right)(n_0,k)
  \sin\frac\om2\,\pl^\al\om\pl_\al k^i+
\\                                                                     \nonumber
  &+2\left[(\dl_{ij}\!-n_{0i}n_{0j}){\cos\!}^2\frac\om2\!-n_0^kk^l\ve_{lki}n_{0j}
  \sin\om+\big(\dl_{ij}(n_0,k)^2\!+n_{0i}n_{oj}\big){\sin\!}^2\frac\om2\right]\!
  {\sin\!}^2\frac\om2\,\pl^\al k^i\pl_\al k^j.
\end{align}
This Lagrangian depends on four fields $(\om,k^i)$ with one condition $k^2=1$.
It is invariant with respect to the gauge $\MU(1)$ transformations
(\ref{ujehtr})--(\ref{enbdgt}) with an arbitrary parameter $\al(x)$. The field
$\phi$ from the previous section is transformed in a simple way:
\begin{equation}                                                  \label{ucnbvt}
  \phi\mapsto\phi'=\phi+\al.
\end{equation}
By construction, the Lagrangian (\ref{ubcgvt}) does not depend on $\al$.

As far as we know, the Lagrangian (\ref{ubcgvt}) is a new kind of a gauge model.
The abelian $\MU(1)$ symmetry is realized nonlinearly, and gauge fields are
absent.

Let us rewrite the Lagrangian in terms of the vector $\boldsymbol{\om}=(\om^i)$
(an element of the algebra $\Gs\Go(3)$). The definition of $k$ implies the
equalities
\begin{equation}                                                  \label{ubcdhy}
  k^i:=\frac{\om^i}\om,\qquad\pl_\al k^i=\frac{\pl_\al\om^i}\om
  -\frac{\om^i\pl_\al\om}{\om^2}\qquad (\pl^\al k,\pl_\al k)
  =\frac{(\pl^\al\boldsymbol{\om},\pl_\al\boldsymbol{\om})}{\om^2}
  -\frac{\pl\om^2}{\om^2}.
\end{equation}
The substitution of the obtained expressions in the Lagrangian (\ref{ubcgvt})
yields a more complicated expression
\begin{equation}                                                  \label{ublkid}
\begin{split}
  L=&\frac{\pl\om^2}2\left[1-\frac{\sin^{\!2}\om}{\om^2}-\frac
  {(n_0,\boldsymbol{\om})^2}{\om^2}\left(1-\frac{\sin\om}\om\right)^2\right]+
\\
  &+\frac{(\pl^\al\boldsymbol{\om},\pl_\al\boldsymbol{\om})}{2\om^2}
  \left[\sin^{\!2}\om+\frac{4(n_0,\boldsymbol{\om})^2}{\om^2}\sin^{\!4}
  \frac\om2\right]
  -\frac{2(n_0,\pl_\al\boldsymbol{\om})^2}{\om^2}\sin^{\!2}\frac\om2\cos\om-
\\
  &-\frac{\pl^\al\om(n_0,\pl_\al\boldsymbol{\om})(n_0,\boldsymbol{\om})}{\om^2}
  \left(\sin\om-\frac4\om\sin^{\!2}\frac\om2\cos\om\right)-
\\
  &-\frac{2\pl^\al\om^i\om^jn_0^k\ve_{ijk}}{\om^3}\sin^{\!2}\frac\om2
  \left[(n_0,\pl_\al\boldsymbol{\om})\sin\om+\pl_\al\om(n_0,\boldsymbol{\om})
  \left(1-\frac{\sin\om}{\om}\right)\right].
\end{split}
\end{equation}
The corresponding action depends only on the three fields $\om^i$, which are
varied without any restriction.
%******************************************************************************
\section{Conclusions}
%*******************************************************************************
We have constructed a new gauge parameterization of the Heisenberg ferromagnet
$n$-field by the rotational-angle field $\boldsymbol{\om}$, which is needed in
the geometric theory of defects. In this parameterization, we have three
independent components of the rotational-angle field $\boldsymbol{\om}$ instead
of the two independent components of the $n$-field. We have shown that this
additional degree of freedom is gauge and corresponds to local rotations around
the $n$-field. Explicit formulas of gauge transformations are found. In
addition, any Lagrangian for the $n$-field leads to a gauge
$\MU(1)\simeq\MS\MO(2)$ model in terms of the new variable $\boldsymbol{\om}$.
As an example, we have considered a gauge parameterization of the Heisenberg
ferromagnet. These models do not contain $\MU(1)$ gauge field but are invariant
with respect to local $\MU(1)$ transformations.

{\bf Funding.}

The work was supported in part by the Russian Government Program of Competitive
Growth of Kazan Federal University (Russian Academic Excellence Project
``5--100'').

\end{document}